\newcommand{\keywords}[1]{\par\addvspace\baselineskip
\noindent\keywordname\enspace\ignorespaces#1}
\begin{document}

\mainmatter  

\title{A Secret Common Information Duality \\for Tripartite Noisy Correlations}
\titlerunning{Secret Common Information Duality}

\author{Pradeep Kr. Banerjee}
\authorrunning{P.K. Banerjee}

\institute{Indian Institute of Technology Kharagpur\\
\mailsa}

\toctitle{Lecture Notes in Computer Science}
\tocauthor{Authors' Instructions}
\maketitle

\begin{abstract}
We explore the duality between the simulation and extraction of secret correlations in light of a similar well-known operational duality between the two notions of common information due to Wyner, and G{\'a}cs and K\"{o}rner. For the \emph{inverse} problem of simulating a tripartite noisy correlation from noiseless secret key and unlimited public communication, we show that Winter's (2005) result for the key cost in terms of a conditional version of Wyner's common information can be simply reexpressed in terms of the existence of a bipartite protocol monotone. For the \emph{forward} problem of key distillation from noisy correlations, we construct simple distributions for which the conditional G{\'a}cs and K\"{o}rner common information achieves a tight bound on the secret key rate. We conjecture that this holds in general for non-communicative key agreement models. We also comment on the interconvertibility of secret correlations under local operations and public communication.
\keywords{Information-theoretic security, secret key agreement, common information, monotones}
\end{abstract}

\section{Introduction}
Information-theoretic (IT) or unconditional security---widely acknowledged as the strictest notion of security has witnessed a renaissance since the 90's, arguably following Maurer's seminal work on secret key (SK) agreement by public discussion from correlated source sequences \cite{ref1}, \cite{ref2}. Compared to computational complexity-based approaches, IT-security provides a framework for \emph{provable security} by bounding the adversary's total information. Assumptions are made neither on the latter's computational or memory resources nor on the unproven computational hardness of certain problems. By harnessing noise and appropriate coding and signaling strategies at the \emph{physical layer} (the lowest layer in the protocol stack), IT-security potentially complements conventional upper layer cryptographic protocols (e.g., RSA, Diffie-Hellman key exchange) and is an important component of future communication networks \cite{ref19}. The paradigm is especially valuable for improving security in large-scale wireless networks and distributed networks with minimal infrastructure (e.g., mobile \emph{ad hoc} sensor networks) where the broadcast nature of the transmission medium makes it particularly vulnerable to attacks, and key distribution and management is difficult and computationally expensive \cite{ref19}. 

But for all its advantages, unconditionally secure key distribution is impossible to realize from scratch, for instance, if the legitimate parties are only given access to a noiseless public communication channel \cite{ref1,ref2,ref3}. This pessimism can however be relativized if the parties are additionally given access to simple auxiliary devices (e.g., noisy correlations or communication channels) that are not completely under the control of an adversary \cite{ref14,ref15}. Thus, information-theoretic reductions between such primitives are of great interest. 

Interconvertibility of noisy tripartite correlations and a uniformly distributed, noiseless SK have been studied under the rubric of SK agreement by public discussion (called the \emph{forward problem} \cite{ref1,ref2}) and the dual problem of simulating a noisy correlation from noiseless SK and public communication (called the \emph{inverse problem} \cite{ref3,ref4,ref5,ref6}). In general, there are irreversible losses in exchanging noisy correlations in that, going from one noisy correlation to another and back is not lossless (even asymptotically) \cite{ref3,ref17}. In light of the resource character of noisy correlations in enabling SK agreement, and that of SK in simulating a tripartite correlation, quantifying such resources are of interest.

To make things precise, consider three distant parties, honest Alice and Bob, and an adversary Eve who observe sequences ${X^n}=(X_1,\ldots,X_n)$, ${Y^n}=(Y_1,\ldots,Y_n)$, and ${Z^n}=(Z_1,\ldots,Z_n)$ respectively, where the sequence triple $(X^nY^nZ^n)$ has the generic component variables $(XYZ)\sim p_{XYZ}$. Starting with no initially shared SKs, and using only local operations and unlimited public communication (LOPC) over a noiseless, authenticated (but, otherwise insecure) channel, what is the maximum rate at which Alice and Bob can distil a SK (i.e., the maximum possible \emph{SK rate}), such that Eve's information ($Z^n$ and the entire public discussion) about the generated SK is arbitrarily small? Conversely, starting with perfect SKs, what is the \emph{SK cost} of approximately simulating the correlated triple $XYZ$ using only LOPC?

We explore this duality between the secrecy extractable from $p_{XYZ}$ and that required to simulate $p_{XYZ}$ in light of a similar well-known operational duality between the two notions of common information (CI) due to G{\'a}cs and K\"{o}rner \cite{ref7} and Wyner \cite{ref8}. For the \emph{inverse} problem of simulating $p_{XYZ}$ from noiseless SK and unlimited public communication, Winter \cite{ref4} gave a single-letter characterization of the asymptotic minimal SK cost of formation in terms of a conditional version of Wyner's CI. We first show that the SK cost of formation can be simply reexpressed in terms of the existence of a bipartite protocol monotone. For the \emph{forward} problem of key distillation from $p_{XYZ}$, we construct simple distributions for which the conditional G{\'a}cs and K\"{o}rner CI captures the ``explicit'' secret CI, thus achieving a tight bound on the SK rate. We also comment on the interconvertibility of secret correlations under LOPC.

\thinmuskip=0mu
\section{CI Duality and Secret CI}
Random variables (RVs) and their finite alphabets are denoted using uppercase letters $X$ and script letters $\mathcal{X}$. ${X^n}$ denotes the sequence $(X_1,\ldots,X_n)$. $p_X(x)=\Pr\{X = x\}$ denotes the distribution (pmf) of a discrete RV $X$. $X-Y-Z$ denotes that $X,Y,Z$ form a Markov chain satisfying ${p_{XYZ}} = {p_{XY}}{p_{Z|Y}}$. Likewise, $X,Y,Z$ is said to form a conditional Markov chain given $U$ if $X-UY-Z$.
The entropy of $X$ is defined as $H(X)= -\sum\nolimits_{x \in \mathcal{X}} {p_X(x)\log p_X(x)}$ and the mutual information of $X$ and $Y$ is given by $I(X;Y)=H(X)-H(X|Y)$. The total variational distance between $p_X$ and $p_{X'}$ is defined as $\mathsf{TV}(p_X, p_{X'}) = \tfrac{1}{2}\sum\nolimits_{x \in \mathcal{X}}|p_X(x)-p_{X'}(x)|$.

The zero pattern of ${p_{XY}}$ can be specified by its characteristic bipartite graph $B_{XY}$ with the vertex set $\mathcal{X} \cup \mathcal{Y}$ and an edge connecting two vertices $x$ and $y$ iff ${p_{XY}}(x,y) \geq 0$. If $B_{XY}$ contains only a single connected component, we say that ${p_{XY}}$ is \emph{indecomposable}. An \emph{ergodic decomposition} of $p_{XY}(x,y)$ is defined by a unique partition of the space $\mathcal{X}\times\mathcal{Y}$ into connected components. The following double markovity lemma \cite{ref12} (also see Problem 16.25, p. 392 in \cite{ref11}) is useful. 
\begin{lemma}
A triple of RVs $(X,Y,Q)$ satisfies the double Markov conditions 
\begin{align*}
X-Y-Q,\text{ }Y-X-Q \tag{1}
\end{align*}
iff there exists a pmf $p_{Q'|XY}$ such that $H(Q'|X)=H(Q'|Y)=0$ and $XY-Q'-Q$. Furthermore, (1) implies $I(XY;Q)\leq H(Q')$ with equality iff $H(Q'|Q)=0$.
\end{lemma}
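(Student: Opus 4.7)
The plan is to prove the two directions of the equivalence separately and then derive the inequality as a straightforward consequence of the Markov structure produced in the forward direction.

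The easy direction ($\Leftarrow$) is to assume such a $Q'$ exists, write $Q'=f(X)=g(Y)$ by $H(Q'|X)=H(Q'|Y)=0$, and observe that $XY-Q'-Q$ gives $p_{Q|XY}=p_{Q|Q'}=p_{Q|f(X)}$, a function of $X$ alone, which yields $Y-X-Q$; symmetrically $X-Y-Q$. So this direction requires only elementary manipulations.

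The main work is the forward direction. The plan is to construct $Q'$ explicitly as the label of the connected component of the characteristic bipartite graph $B_{XY}$ containing the pair $(X,Y)$, i.e., the Gács--Körner common part of $X$ and $Y$. By definition of the ergodic decomposition of $p_{XY}$, this component label is determined by either $X$ alone or $Y$ alone, so $H(Q'|X)=H(Q'|Y)=0$ will hold automatically. The nontrivial step is to show $XY-Q'-Q$, and this is where I expect the real content to lie. The two Markov conditions (1) say $p_{Q|XY}(q|x,y)=p_{Q|Y}(q|y)=p_{Q|X}(q|x)$ whenever $p_{XY}(x,y)>0$. This forces $p_{Q|X}(q|x)$ and $p_{Q|Y}(q|y)$ to agree along every edge of $B_{XY}$, hence to be constant on each connected component. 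Consequently $p_{Q|XY}$ depends on $(x,y)$ only through the component label $Q'$, and we can write $p_{Q|XY}=p_{Q|Q'}$, establishing $XY-Q'-Q$. The conceptual obstacle, and the step that deserves the most care, is the propagation of the equality of conditional distributions from single edges to entire connected components by a chain argument along alternating $x$'s and $y$'s; everything else is bookkeeping.

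Finally, for the inequality, the plan is to exploit the Markov chain just obtained: under $XY-Q'-Q$ we have $H(Q|XY)=H(Q|Q')$, so
\begin{align*}
I(XY;Q) = H(Q)-H(Q|XY) = H(Q)-H(Q|Q') = I(Q';Q) \leq H(Q'),
\end{align*}
with the last inequality tight iff $H(Q'|Q)=0$. This closes the statement and requires no further argument beyond the chain rule and nonnegativity of conditional entropy.
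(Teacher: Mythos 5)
Your proposal is correct and follows essentially the same route as the paper: construct $Q'$ as the connected-component label of the ergodic decomposition of $p_{XY}$ (the G\'acs--K\"orner common part), use the edge-wise equality $p_{Q|X}(q|x)=p_{Q|Y}(q|y)$ forced by the double Markov conditions to show $p_{Q|XY}$ is constant on each component and hence $XY-Q'-Q$, and then obtain $I(XY;Q)=I(Q';Q)\leq H(Q')$ with equality iff $H(Q'|Q)=0$. Your explicit mention of the alternating-chain propagation along components and your spelled-out converse are slightly more detailed than the paper's, but the argument is the same.
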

\begin{proof}
Given $p_{Q|XY}$ such that $X-Y-Q$ and $Y-X-Q$, it follows that $p_{XY}(x,y)>0 \Rightarrow p_{Q|XY}(q|x,y)=p_{Q|X}(q|x)=p_{Q|Y}(q|y)$ $\forall q$. Given an ergodic decomposition of $p_{XY}(x,y)$ such that $\mathcal{X}\times\mathcal{Y}=\bigcup\nolimits_{{q'}} \mathcal{X}_{q'}\times\mathcal{Y}_{q'}$, where the $\mathcal{X}_{q'}$$'$s and $\mathcal{Y}_{q'}$$'$s having different subscripts are disjoint, define $p_{Q'|XY}$ as $Q'=q'$ iff $x \in \mathcal{X}_{q'} \Leftrightarrow y \in \mathcal{Y}_{q'}$. Clearly $H(Q'|X)=H(Q'|Y)=0$. Then, for any $Q=q$ and for every $q'$, $p_{Q|XY}(q|\cdot,\cdot)$ is constant over $\mathcal{X}_{q'}\times\mathcal{Y}_{q'}$. This implies that $p_{Q|XY}(q|x,y)=p_{Q|Q'}(q|q')$ so that $XY-Q'-Q$. The converse is obvious. Thus, given (1), we get $Q'$ such that $I(XY;Q|Q')=0$ so that $I(XY;Q)=I(XYQ';Q)=I(Q';Q)=H(Q')-H(Q'|Q)\leq H(Q')$.
\qed \end{proof}

G{\'a}cs and K\"{o}rner (GK) \cite{ref7} defined CI as the maximum rate of common randomness (CR) $(R)$ that Alice and Bob, observing $X^n$ and $Y^n$ separately can \emph{extract} without communication $(R_0=0)$, i.e., $C_{GK}(X;Y)=\sup \tfrac{1}{n}H(f_1(X^n))$ where the sup is taken over all sequences of pairs of deterministic mappings $(f_1^n,f_2^n)$ such that $\Pr \{ f_1^n({X^n}) \ne f_2^n({Y^n})\}  \to 0{\text{ as }}n \to \infty$ (see setup in Fig. \ref{fig:fig_sim}(a)). GK showed that 
\begin{align*}
{C_{GK}}(X;Y) =\mathop {\max }\limits_{\substack{Q:\text{ }H(Q|X) = H(Q|Y) = 0}} H(Q)= H(Q_*) \tag{2}
\end{align*}
where $Q_*$ is the \emph{maximal common RV} of the pair $(X,Y)$ induced by the ergodic decomposition of ${p_{XY}}$. For all $X,Y$, we have $I(X;Y)=H(Q_*)+I(X;Y|Q_*)$. We say that $p_{XY}$ is \emph{resolvable}, if $I(X;Y|Q_*)=0$. An alternative characterization of $C_{GK}(X;Y)$ follows from Lemma 1 \cite{ref12}.
\begin{align*}
{C_{GK}}(X;Y) = \mathop {\max }\limits_{\substack{ Q:\text{ }Q-X-Y, \text{ } Q-Y-X}} I(XY;Q),{\text{ }}|\mathcal{Q}| \leq |\mathcal{X}||\mathcal{Y}|+2 \tag{3}
\end{align*}
$C_{GK}(X;Y)$ is identically zero for all indecomposable distributions. For example, a binary symmetric channel with non-zero crossover probability is indecomposable and hence $C_{GK}(X;Y)=0$. Thus, CR is a far stronger resource than correlation, in that the latter does not result in common random bits, in general. Nevertheless, when Alice communicates with Bob ($R_0>0$), they can unlock hidden layers of potential CR \cite{ref9}. With a high enough rate of communication (that is independent of Bob's output), the CR rate \emph{increases} to $I(X;Y)$ \cite{ref9}.

A conditional version of GK CI is defined as follows.
\begin{align*}
{C_{GK}}(X;Y|Z)= H(Q_*|Z)=\mathop {\max }\limits_{\substack{Q:H(Q|XZ)=0\\ \hspace{3mm} H(Q|YZ) = 0}} H(Q|Z) = \mathop {\max }\limits_{\substack{ Q-XZ-Y \\ Q-YZ-X}} I(XY;Q|Z) \tag{4}
\end{align*}
Conditioning always reduces GK CI, i.e., ${C_{GK}}(X;Y|Z)$ $\leq$ ${C_{GK}}(X;Y)$. We say that $p_{XYZ}$ is \emph{conditionally resolvable}, if $I(X;Y|ZQ_*)=0$. 

Wyner \cite{ref8} defined CI as the minimum rate of CR $(R)$ needed to \emph{generate} $X^n$ and $Y^n$ separately using local operations (independent noisy channels: $Q \to X^n$, $\;$ $Q \to Y^n$) and no communication $(R_0=0)$ (see setup in Fig. \ref{fig:fig_sim}(b)).
\begin{align*}
{C_W}(X;Y) = \mathop {\min }\limits_{Q:X - Q - Y} I(XY;Q),{\text{ }}|\mathcal{Q}| \leq |\mathcal{X}||\mathcal{Y}| \tag{5}
\end{align*}
Likewise, a conditional version of Wyner's CI is defined as follows.
\begin{align*}
{C_W}(X;Y|Z) &= \mathop {\min }\limits_{Q:X - QZ - Y} I(XY;Q|Z) \tag{6} 
\end{align*}
${C_W}(X;Y)$ quantifies the resource cost for the distributed approximate simulation of ${p_{XY}}$. When Alice communicates with Bob ($R_0>0$), with a high enough rate of communication (independent of Bob's output), the CR rate \emph{reduces} to $I(X;Y)$ \cite{ref9}. Reversing the direction of Alice's operation $(X^n \to Q^n)$ (see Fig. \ref{fig:fig_sim}(c)) leads to a two-stage simulation of a noisy channel via the Markov chain $X - Q - Y$. Now Alice and Bob can use the reverse Shannon theorem \cite{ref10} to simulate a first stage, with Alice mapping $X^n$ to some intermediate RV $Q^n$ which she sends noiselessly to Bob. In the second stage, Bob locally maps $Q^n$ to get $Y^n$. This gives a nontrivial tradeoff between the (noiseless) communication rate $R_0$ and CR rate $R$ leading to an alternative characterization of Wyner's CI as the communication cost of distributed channel simulation without any CR $(R = 0)$. With unlimited CR, the cost reduces to $I(X;Y)$ \cite{ref9}. Finally, to complete the duality, we note the following well-known relation between the different notions of CI \cite{ref12}: ${C_{GK}}(X;Y) \leq I(X;Y) \leq {C_W}(X;Y)$ with equality holding iff $p_{XY}$ is resolvable, whence ${C_{GK}}(X;Y) = I(X;Y) \Leftrightarrow I(X;Y) = {C_W}(X;Y)$.
\begin{figure}[t]
\centering
\includegraphics[width=4.8in,height=3.2in,keepaspectratio]{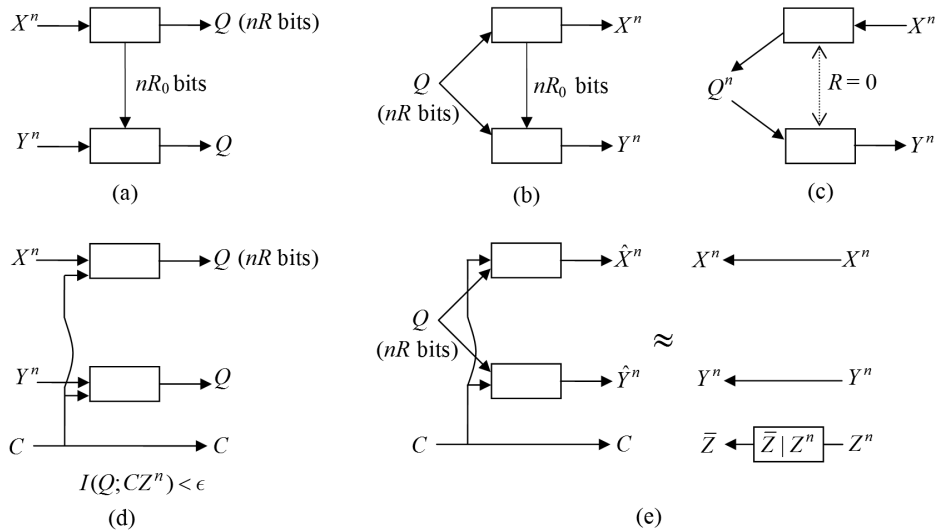}
\caption{\emph{Common information duality (with communication)}: (a) G{\'a}cs and K\"{o}rner's setup $(R_0=0)$ \cite{ref7} (b) Wyner's setup $(R_0=0)$ \cite{ref8} (c) Inverting Alice's channel gives an alternative characterization of Wyner's CI as the communication cost of channel simulation without any CR $(R = 0)$ \cite{ref9,ref10} (d) Setup for SK agreement \cite{ref1,ref2,ref3} (e) Setup for distributed simulation of $p_{XYZ}$ from SK and public communication \cite{ref3,ref4,ref5,ref6}}
\label{fig:fig_sim}
\end{figure}

The setup for the standard SK agreement scenario \cite{ref1,ref2,ref3,ref14,ref15,ref16} shown in Fig. \ref{fig:fig_sim}(d) is a generalization of the GK setup (see Fig. \ref{fig:fig_sim}(a)) that now allows for interactive communication. Consider the following distributed communication protocol, $\Pi_{KA}$. Alice ($X$) and Bob ($Y$) communicate interactively over an authenticated (noiseless) public discussion channel (transparent to an adversary, Eve ($Z$)). Both have independent access to an infinite stream of private randomness. The protocol proceeds in rounds, where in each round each party flips private coins, and based on the messages exchanged so far, publicly sends a message to the other party. At the end of the protocol, Alice (Bob) either accepts or rejects the protocol execution, and outputs a key $Q_X$ ($Q_Y$) depending on her (his) view of the protocol, which comprises of $X^n$ ($Y^n$), all local computations, and the entire public communication (encapsulated in the RV $C$). The asymptotic maximum rate of SK distillation is called the \emph{SK rate} \cite{ref3,ref14}. 

\spnewtheorem{defn}{Definition}{\bfseries}{\itshape}
\begin{defn}
The SK rate, $S(X;Y||Z)$ is defined as the largest real number $R$ such that for all $\epsilon > 0$, there exists an integer $n$ and a randomized protocol $\Pi_{KA}$ with communication $C$ that with probability $1-\epsilon$, allows Alice (knowing $X^n$) and Bob (knowing $Y^n$) to compute $Q_X$ and $Q_Y$, respectively, satisfying $\Pr \{{{Q}_{X}}={{Q}_{Y}}=Q\}\ge 1-\epsilon ,\text{ }H(Q)=\log |\mathcal{Q}|\ge n(R-\epsilon),\text{ }I(Q;CZ^n)<\epsilon$. 
\end{defn}

Analogously, there exists an inverse protocol $\Pi_{form}$, and a dual measure, the asymptotic minimal \emph{SK cost of formation} of the triple $(X,Y,Z) \sim p_{XYZ}$ \cite{ref3,ref4,ref5,ref6}. While the distributed channel synthesis problem \cite{ref9} (setup in Fig. \ref{fig:fig_sim}(b)) explores the role of CR and one-way communication in the formation of bipartite correlations ${p_{XY}}$ shared between honest Alice and Bob, the setup in Fig. \ref{fig:fig_sim}(e) explores the role of \emph{secret CR} (i.e., SK) and public communication in the distributed simulation of tripartite correlations ${p_{XYZ}}$, now additionally shared with an adversary Eve \cite{ref3,ref4,ref5,ref6}. To start with, Alice and Bob share a SK in the form of $R$ perfectly correlated bits, i.e., Alice has $Q_X$ and Bob, $Q_Y$ such that $Q_X=Q_Y=Q$ and $H(Q)=R$. The goal is to approximately simulate correlated sequence triples of the form $(X^n,Y^n,Z^n)$, upto a local degrading of $Z^n$. Both parties have independent access to sources of private randomness. The protocol proceeds in rounds and at the end of the communication phase, Alice and Bob output ${\hat {X}^n}$ and ${\hat {Y}^n}$, respectively, as deterministic functions of their views, which comprises of the shared SK $Q$, all the private coins flipped so far, and the entire public communication ($C$). Discounting the private randomness by allowing for stochastic mappings, $\Pi_{form}$ can be specified by the joint distribution $p_{{\hat {X}^n}{\hat {Y}^n}CQ}$ $=$ $p_{{{\hat {X}^n}|CQ},{{\hat {Y}^n}|CQ}}p_{CQ}$. The resulting simulation is ``good'' from Alice and Bob's point of view, if they end up correctly generating the marginal $p_{X^nY^n}$. However, Eve ($Z^n$) might still have some information about $X^nY^n$. This is formalized by requiring that Eve's optimal channel $p_{{\bar Z}|Z^n}$ is only able to simulate the public communication ($C$) used by Alice and Bob to generate $({{\hat {X}^n},{\hat {Y}^n}})$. The following definition makes this precise \cite{ref3}.
\begin{defn}
The SK cost of formation, $SK_c(X;Y|Z)$ is the infimum of all numbers $R \geq 0$ such that for all $\epsilon > 0$, there exists an integer $n$ and a randomized protocol $\Pi_{form}$  with communication $C$ that with probability $1-\epsilon$, allows Alice and Bob, knowing a common random $\left\lfloor {nR} \right\rfloor$-bit string $Q$, to compute ${{\hat X}^n}$ and ${{\hat Y}^n}$, respectively, such that $\Pr \{ ({{\hat X}^n},{{\hat Y}^n},C) = ({X^n},{Y^n},\bar Z)\}  \geq 1 -\epsilon$ holds for some correlated sequence triple $(X^n,Y^n,Z^n)$ that has generic component variables $(X,Y,Z)\sim p_{XYZ}$ and some channel $p_{\bar Z|Z^n}$.
\end{defn}

${p_{XYZ}}$ contains secret correlations iff it cannot be generated by LOPC, i.e., $SK_c(X;Y|Z)>0$ \cite{ref18}. Both the SK cost of formation and the SK rate are measures of the secrecy content of $p_{XYZ}$ and admit a clear operational interpretation: $SK_c(X;Y|Z)$ quantifies the \emph{minimum} amount of SK bits required to (approximately) \emph{simulate} $p_{XYZ}$, whereas $S(X;Y||Z)$ quantifies the \emph{maximum} amount of SK bits that one can \emph{extract} from $p_{XYZ}$. 

Winter used resolvability-based arguments \cite{ref4} to arrive at the full secret correlation vs. public communication trade-off for the inverse problem and defined the SK cost of formation with unlimited public communication as
\begin{align*}
{SK_c}(X;Y|Z) = \mathop {\min }\limits_{\substack{
  Q:X - Q\bar Z - Y  \\
  \bar Z:XY - Z - \bar Z  \\ 
}}  I(XY;Q|\bar Z) \tag{7}
\end{align*}
where the minimum is taken over all RVs $Q$ and $\bar Z$. Cardinalities of the corresponding alphabets are bounded as $|{\mathcal{Q}}| \leq |{\mathcal{X}}||{\mathcal{Y}}|$, and $|\bar{\mathcal{Z}} | \leq  |\mathcal{Z}|$ respectively. ${SK_c}(X;Y|Z)$ is bounded from below by the \emph{intrinsic information} \cite{ref3}, that intuitively speaking, measures the correlation shared between Alice and Bob that Eve cannot access or destroy \cite{ref2}.
\begin{align*}
I(X;Y \downarrow Z) = \mathop {\min }\limits_{\bar Z:XY - Z - \bar Z} I(X;Y|\bar Z)  \tag{8}
\end{align*}
where the cardinality of the alphabet $\bar {\mathcal{Z}}$ of RV $\bar Z$ is bounded as $|\bar {\mathcal{Z}}| \leq |\mathcal{Z}|$ \cite{ref13}. 
Both $I(X;Y \downarrow Z)$ and ${SK_c}(X;Y|Z)$ are known to be \emph{lockable} \cite{ref3,ref4}, i.e., can fall sharply by an arbitrary large amount on giving away a single bit to Eve.

\section{Main contributions}
\subsection{Wyner's Conditional CI and the SK Cost of Formation}
${C_W}(X;Y)$ can be interpreted as the SK cost of creating a product distribution with Eve. Now consider Eve's optimal channel $p_{\bar{Z}|Z}$. To create a (public) correlation, Alice randomly samples $\bar{Z}$ and publicly announces the value of $\bar{Z}$ over a symmetric broadcast channel to Bob and Eve. Then the SK cost of creating the product distribution $p_{XY|{\bar{Z}}}$ is $C_W(X;Y|{\bar{Z}})$. In this section, we introduce the concept of protocol monotones \cite{ref6,ref14,ref15,ref16,ref17} for axiomatizing the general properties of upper bounds on the SK rate and show that $C_W(X;Y|\bar{Z})$ qualifies as such an upper bound and is a natural candidate for quantifying the SK cost of formation.
 
When distant parties wish to establish a SK by manipulating some set of private and public resources, it is natural to restrict attention to LOPC operations. Since the ``resourcefulness'' or ``secrecy content'' of the state is a non-local property that cannot increase under LOPC, this set of transformations is deemed as a \emph{free} resource. Mathematically, resources can be quantified by \emph{monotones}, real-valued functions of joint distributions that cannot increase under LOPC. LOPC monotones were first introduced in \cite{ref17} as classical counterparts of entanglement or LOCC (local operations and classical communication) monotones to study the rate of resource conversion under LOPC. A resource cannot increase under LOPC operations by Alice and Bob. Since Eve, in her role as a malicious adversary is always assumed to operate optimally against Alice and Bob, the same resource cannot decrease under her LOPC operations \cite{ref6,ref14,ref17}. We define a monotone as follows.
\begin{defn}
For all jointly distributed RVs $(X,Y,Z)$, let $\mathcal{M}(X;Y|Z)$ be a real-valued function of $p_{XYZ}$. Then $\mathcal{M}$ is a monotone if the following hold:

\begin{trivlist}
  \item \emph{1) Monotonicity under local operations (LO) by Alice and Bob:} Suppose Alice modifies $X$ to ${\bar X}$ by sending $X$ over a channel, characterized by $p_{{\bar X}|{X}}$. Then $\mathcal{M}$ can only decrease, i.e., for all jointly distributed RVs $(X,Y,Z,\bar X)$ with $\bar X:YZ - X - \bar X$, $\mathcal{M}(\bar X;Y|Z)$ $\leq$ $\mathcal{M}(X;Y|Z)$, and likewise for Bob.
  \item \emph{2) Monotonicity under public communication (PC) by Alice and Bob:} Suppose Alice publicly announces the value of $\widetilde X$. Then $\mathcal{M}$ can only decrease, i.e., for all jointly distributed RVs $(X,Y,Z, \widetilde X)$ with $H(\widetilde X|X)=0$, $\mathcal{M}(X;Y\widetilde X|Z\widetilde X)$ $\leq$ $\mathcal{M}(X;Y|Z)$, and likewise for Bob.
  \item \emph{3) Monotonicity under local operations (LO) by Eve:} Suppose Eve modifies $Z$ to $\bar Z$ by sending $Z$ over a channel, characterized by $p_{\bar Z|{Z}}$. Then $\mathcal{M}$ can only increase, i.e., for all jointly distributed RVs $(X,Y,Z,\bar Z)$ with $\bar Z:XY - Z - \bar Z$,  $\mathcal{M}(X;Y|\bar Z)$ $\geq$ $\mathcal{M}(X;Y|Z)$.
  \item \emph{4) Monotonicity under public communication (PC) by Eve:} Suppose Eve publicly announces the value of $\widetilde Z$. Then $\mathcal{M}$ can only increase, i.e., for all jointly distributed RVs $(X,Y,Z,\widetilde Z)$ with $H(\widetilde Z|Z)=0$,, $\mathcal{M}(\widetilde ZX;\widetilde ZY|Z)$ $\geq$ $\mathcal{M}(X;Y|Z)$.
  \item \emph{5) Additivity and Continuity:} $\mathcal{M}$ is additive on tensor products and is a semi-positive, continuous function of $p_{XYZ}$. A stronger notion of asymptotic continuity requires that for two pmfs ${{p_{XYZ}}}$, ${{q_{XYZ}}}$, if $\mathsf{TV}({{p_{XYZ}}}, {{q_{XYZ}}})=\epsilon$, then $|\mathcal{M}({{p_{XYZ}}})-\mathcal{M}({{q_{XYZ}}})|\leq \epsilon \log{d} + \delta(\epsilon)$, where $d$ is some constant that depends on $|\mathcal{X}|,\;|\mathcal{Y}|$ and $|\mathcal{Z}|$ and $\delta(\epsilon)$ is any function that depends only on $\epsilon$ with $\delta(0)=0$.
\end{trivlist}
\end{defn}

If $\mathcal{M}$ satisfies the conditions in Definition 3, then $\mathcal{M}$ is an upper bound on the SK rate \cite{ref14,ref15}.
Upper bounds on the rate at which instances of a target primitive can be realized per invocation of the source primitive can be obtained by comparing the value of the monotone on the source and target states. Given the class of LOPC operations, suppose that the parties are able to convert $n$ copies of $p_{XYZ}$ into some realization of a distribution $q'_{XYZ}$ which is close to $m$ independent realizations of the target distribution $q_{XYZ}$, i.e., ${p^{ \otimes n}}\mathop  \to \limits^{LOPC} q' \simeq {q^{ \otimes m}}$. Then by virtue of Property (5) in Definition 3, we have, $\mathcal{M}({p^{ \otimes n}})=n\mathcal{M}(p)\geq \mathcal{M}(q')\simeq m\mathcal{M}(q)$, so that the optimal rate  $R(p \to q)$ at which transformations ${p^{ \otimes n}}\mathop  \to \limits^{LOPC} {q^{ \otimes m}}$ are possible is upper bounded as  $\tfrac{m}{n} \leq \tfrac{\mathcal{M}({p})}{\mathcal{M}({q})}$ \cite{ref17}. Thus if $p_{SS{\Delta}}^s$ denotes the distribution of a perfect secret bit with $p_{SS{\Delta}}^s(0,0,\delta)=p_{SS{\Delta}}^s(1,1,\delta)=\tfrac{1}{2}$, then for any pmf $q_{XYZ}$, $S(X;Y||Z)=R(q \to p^s)\leq \tfrac{I(X;Y \downarrow Z)}{I(S;S \downarrow \Delta)} = I(X;Y \downarrow Z)$, since $I(S;S \downarrow \Delta)=1$, and $I(X;Y \downarrow Z)$ is a monotone \cite{ref3}. Likewise $R(p^s \to q)\leq \tfrac{I(S;S \downarrow \Delta)}{I(X;Y \downarrow Z)}=\tfrac{1}{I(X;Y \downarrow Z)}$. Hence $SK_c(X;Y|Z)=\tfrac{1}{R(p^s \to q)} \geq I(X;Y \downarrow Z)$ \cite{ref3}.

$C_W(X;Y|Z)$ violates monotonicity under LO by Eve\footnote{Given jointly distributed RVs $(X,Y,Z,\bar Z)$ with $\bar Z:XY - Z - \bar Z$, it does not hold in general that $C_W(X;Y|\bar{Z}) \geq C_W(X;Y|Z)$.} and thus fails to achieve an upper bound on the SK rate. To preserve monotonicity under LO by Eve, an additional minimization is required over all stochastic maps ${p_{\bar Z|Z}}$ that can be chosen by Eve for locally processing her observations. This implies a minimization over all Markov chains $XY - Z - \bar Z$ which is tantamount to the simulation approximation of $(X^nY^nZ^n)$ upto a local degrading in $Z^n$ in Definition 2 of the SK cost of formation. This yields a new quantity called \emph{Wyner's intrinsic conditional CI} that satisfies monotonicity under Eve's LOPC operations. 
\begin{align*}
  {C_W}(X;Y \downarrow Z) = \mathop {\min }\limits_{\bar Z:XY - Z - \bar Z} {C_W}(X;Y|\bar Z)
    = \mathop {\min }\limits_{\substack{
  Q:X - Q\bar Z - Y \\
  \bar Z:XY - Z - \bar Z  \\ 
}}  I(XY;Q|\bar Z)  \tag{9}
\end{align*}
The cardinality of the alphabet $\bar{\mathcal{Z}}$ of $\bar Z$ is bounded as $|\bar{\mathcal{Z}}| \leq |\mathcal{Z}|$. This can be shown following similar arguments as in \cite{ref13} using Carath{\'e}odory's theorem. From (6), we have $|\mathcal{Q}| \leq |\mathcal{X}||\mathcal{Y}|$. 
We now show that ${C_W}(X;Y \downarrow Z)$ is indeed a monotone and hence a valid upper bound on the SK rate.

\begin{lemma}
${C_W}(X;Y \downarrow Z)$ is a monotone.
\end{lemma}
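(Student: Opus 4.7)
The plan is to verify, in turn, each of the five defining properties of a monotone in Definition 3. In every case the strategy is the same: start from an $\epsilon$-optimal pair of auxiliaries $(Q^*, \bar Z^*)$ achieving the minimization in the relevant instance of $C_W(\cdot;\cdot\downarrow\cdot)$, and exhibit a feasible candidate pair for the transformed problem whose objective value delivers the required inequality.

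For Property (1), when Alice replaces $X$ by $\bar X$ with $\bar X: YZ-X-\bar X$, I would extend the joint so that $\bar X$ is conditionally independent of $(Y, Z, Q^*, \bar Z^*)$ given $X$, and reuse $(Q^*, \bar Z^*)$ for the triple $(\bar X, Y, Z)$. The feasibility conditions $\bar X Y - Z - \bar Z^*$ and $\bar X - Q^*\bar Z^* - Y$ then follow by direct manipulation of the extended Markov chain, and the value inequality $I(\bar X Y; Q^*|\bar Z^*) \leq I(XY;Q^*|\bar Z^*)$ falls out of the chain-rule decomposition of $I(X\bar X Y; Q^*|\bar Z^*)$ together with $I(\bar X; Q^*|XY\bar Z^*) = 0$. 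For Property (2), where Alice publicly announces $\widetilde X$ with $H(\widetilde X|X)=0$, I would set $Q'' = Q^*$ and $\bar Z'' = (\bar Z^*, \widetilde X)$; feasibility for the triple $(X, Y\widetilde X, Z\widetilde X)$ is immediate since $\widetilde X$ is deterministic in $X$, and the chain rule gives $I(XY;Q^*|\bar Z^*\widetilde X) = I(XY;Q^*|\bar Z^*) - I(\widetilde X;Q^*|\bar Z^*) \leq I(XY;Q^*|\bar Z^*)$. Property (3) reduces to a short feasible-set containment argument: any $\bar{\bar Z}$ satisfying $XY-\bar Z-\bar{\bar Z}$ automatically satisfies $XY-Z-\bar{\bar Z}$ by composition with $XY-Z-\bar Z$, so $C_W(X;Y\downarrow Z)$ minimizes over a larger set than $C_W(X;Y\downarrow \bar Z)$, yielding the desired direction.

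Property (4) is where I expect to spend the most care. Given any feasible $(Q', \bar Z')$ for $C_W(\widetilde Z X; \widetilde Z Y\downarrow Z)$, I would take $Q'' = Q'$ and $\bar Z'' = (\bar Z', \widetilde Z)$ as candidates for the original triple. The critical observation is that the shared-$\widetilde Z$ Markov chain $\widetilde Z X - Q'\bar Z' - \widetilde Z Y$ is equivalent to $H(\widetilde Z|Q'\bar Z')=0$ together with $X - Q'\bar Z'\widetilde Z - Y$; this delivers the feasibility condition $X - Q''\bar Z'' - Y$, while $XY - Z - \bar Z''$ is automatic because $\widetilde Z$ is a function of $Z$. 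The value inequality then reads $I(XY;Q'|\bar Z'\widetilde Z) = I(XY\widetilde Z;Q'|\bar Z') - I(\widetilde Z;Q'|\bar Z') \leq I(\widetilde Z X,\widetilde Z Y; Q'|\bar Z')$. For Property (5), additivity on tensor products follows by choosing product auxiliaries for the $\leq$ direction and by standard single-letterization (as in Wyner's original additivity proof for $C_W$) for the $\geq$ direction, while continuity is inherited from the continuity of conditional mutual information on the compact, cardinality-bounded simplex furnished by the Carath\'eodory-type bounds $|\mathcal{Q}|\leq|\mathcal{X}||\mathcal{Y}|$ and $|\bar{\mathcal{Z}}|\leq|\mathcal{Z}|$.

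The main obstacle is precisely the shared-variable Markov chain manipulation in Property (4): one must recognize that the symmetric appearance of $\widetilde Z$ on both sides of the chain forces it to be conditionally a function of $(Q', \bar Z')$, which is exactly what legitimizes absorbing $\widetilde Z$ into Eve's conditioning variable and reconciling the ``public announcement'' transformation with the outer minimization over $\bar Z: XY-Z-\bar Z$ in definition (9). The other nontrivial technicality is the asymptotic-continuity portion of (5), which requires a Fannes-type estimate on conditional mutual information over the cardinality-bounded optimization domain; every other verification reduces to a feasibility check plus a short chain-rule computation.
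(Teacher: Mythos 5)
Your proposal is correct and follows the same basic template as the paper's proof --- fix ($\epsilon$-)optimal auxiliaries $(Q^*,\bar Z^*)$, exhibit them (or a slight modification) as feasible for the transformed triple, and close with a chain-rule estimate --- but it is more complete in two respects. The paper fixes the minimizing $p_{\bar Z|Z}$ at the outset and only writes out monotonicity of $C_W(X;Y|\bar Z)$ under Alice's and Bob's LO and PC (your Properties (1) and (2), via the same two inequalities $H(Q|\bar X Y\bar Z)\geq H(Q|XY\bar Z)$ and $I(XY;Q|\bar Z\widetilde X)\leq I(XY;Q|\bar Z)$); it leaves Eve's side implicit in the remark that the extra minimization over $XY-Z-\bar Z$ is what restores monotonicity under her operations. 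Your explicit feasible-set containment for Property (3) and, especially, your treatment of Property (4) --- extracting $H(\widetilde Z|Q'\bar Z')=0$ and $X-Q'\bar Z'\widetilde Z-Y$ from the shared-variable chain $\widetilde ZX-Q'\bar Z'-\widetilde ZY$ and absorbing $\widetilde Z$ into $\bar Z''$ --- supply arguments the paper does not write down, and your handling of feasibility under Alice's LO (positing $\bar X$ conditionally independent of $(Y,Z,Q^*,\bar Z^*)$ given $X$) is cleaner than the paper's appeal to the deterministic inequality $H(Z|f(Y))\geq H(Z|Y)$.

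One caution on Property (5): do not plan to prove asymptotic continuity via a Fannes-type estimate. The paper explicitly declines to target it, and for good reason --- the SK cost of formation is lockable (it can drop by an arbitrarily large amount upon revealing one bit to Eve), so asymptotic continuity in the stated form actually fails for $C_W(X;Y\downarrow Z)$. Definition~3 only requires the weaker properties of additivity, semi-positivity and ordinary continuity, and the "stronger notion" is optional; restrict your Property (5) argument accordingly.
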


\begin{proof}
Let $p_{\bar{Z}|Z}$ be the minimizer in ${C_W}(X;Y \downarrow Z)$.  Then it suffices to show that ${C_W}(X;Y|\bar{Z})$ satisfies monotonicity under Alice and Bob's LOPC operations. We shall require the following monotonicity inequalities: (a) $H(Z|f(Y)) \geq H(Z|Y)$, (b) $I(X;Y|f(X)Z) \leq I(X;Y|Z)$. For all jointly distributed RVs $(XY\bar{Z}Q\bar{X})$ such that $Y\bar{Z} - X - \bar X$ and $\bar{X}-Q\bar{Z}-Y$, monotonicity under LO by Alice holds since, 
$$  I({\bar X}Y;Q|\bar{Z})=H(Q|\bar{Z})-H(Q|{\bar X}Y\bar{Z})\mathop\leq\limits^{{\text{(a)}}}H(Q|\bar{Z})-H(Q|XY\bar{Z})=I(XY;Q|\bar{Z}),$$
and likewise for Bob. Similarly, for all jointly distributed RVs $(XY\bar{Z}Q\widetilde X)$ such that $H(\widetilde X|X)=0$, monotonicity under PC by Alice holds since,
$$  I(XY\widetilde{X};Q\widetilde{X}|\bar{Z}\widetilde{X}) =  I(XY;Q|\bar{Z}\widetilde{X})\mathop \leq  \limits^{{\text{(b)}}} I(XY;Q|\bar{Z}),$$
and likewise for Bob. 

Additivity and continuity are valuable properties if the monotone is to provide information on the rate of transformations ${p^{ \otimes n}}\mathop  \to \limits^{LOPC} {q^{ \otimes m}}$ in the asymptotic limit $n,m \to \infty$. It is easy to show that for independent triples $(X_1Y_1\bar{Z_1})$ and $(X_2Y_2\bar{Z_2})$, $C_W(X_1X_2;Y_1Y_2|{\bar{Z_1}\bar{Z_2}})=C_W(X_1;Y_1|\bar{Z_1})+C_W(X_2;Y_2|\bar{Z_2})$. We do not target the stronger notion of asymptotic continuity since it is already known that the SK cost fails this property and admits locking \cite{ref4}.
\qed
\end{proof}
Theorem 3 gives the main result of this section.
\begin{theorem}
With unlimited public communication, ${SK_c}(X;Y|Z)={C_W}(X;Y\downarrow{Z})$. Furthermore, if ${p_{\bar Z|Z}}$ is the optimal stochastic map achieving the minimum in the characterization of ${C_W}(X;Y \downarrow Z)$, then ${C_W}(X;Y \downarrow Z)=I(X;Y \downarrow Z)$ iff there exists a pmf $p_{Q'|XYZ\bar{Z}}$ s.t. $H(Q'|X\bar{Z})=H(Q'|Y\bar{Z})=0$ and $X-\bar{Z}Q'-Y$. 
\end{theorem}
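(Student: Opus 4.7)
The plan is to handle the two claims in turn.

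For the first claim (the identity $SK_c(X;Y|Z)=C_W(X;Y\downarrow Z)$ under unlimited public communication), I would simply observe that Winter's single-letter formula~(7) for $SK_c(X;Y|Z)$ and the definition~(9) of $C_W(X;Y\downarrow Z)$ are literally the same joint minimization of $I(XY;Q|\bar Z)$ over $(Q,\bar Z)$ subject to the identical Markov constraints $X-Q\bar Z-Y$ and $XY-Z-\bar Z$. The asserted equality is then a direct inspection: Winter already supplies the operational achievability and converse, while (9) repackages the expression as the LOPC monotone validated in Lemma~2.

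For the second claim, I would anchor the argument on the chain
\[
C_W(X;Y\downarrow Z)\;=\;C_W(X;Y|\bar Z^*)\;\geq\;I(X;Y|\bar Z^*)\;\geq\;I(X;Y\downarrow Z),
\]
where $\bar Z^*$ denotes the optimizer in~(9) and the first ``$\geq$'' is the general $C_W\geq I$ inequality. For the ``only if'' direction, equality of the two outer quantities forces both middle inequalities to be tight, in particular $C_W(X;Y|\bar Z^*)=I(X;Y|\bar Z^*)$. I would then extract $Q'$ slicewise via Lemma~1: conditioning on each value $\bar z$ of $\bar Z^*$, the equality $C_W(X;Y|\bar z)=I(X;Y|\bar z)$ asserts that $p_{XY|\bar z}$ is resolvable, so Lemma~1 produces a common function $Q'_{\bar z}$ of $X$ and of $Y$ with $X-Q'_{\bar z}-Y$ under $p_{XY|\bar z}$; patching these $Q'_{\bar z}$ over $\bar z$ yields a single $Q'$ with $H(Q'|X\bar Z^*)=H(Q'|Y\bar Z^*)=0$ and $X-\bar Z^*Q'-Y$, as required.

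The main obstacle lies in the ``if'' direction. Once such a $Q'$ is available, Lemma~1 applied slicewise immediately reverses the construction and gives $C_W(X;Y|\bar Z^*)=I(X;Y|\bar Z^*)$, whence $C_W(X;Y\downarrow Z)=I(X;Y|\bar Z^*)\geq I(X;Y\downarrow Z)$---but this only reproduces the monotone inequality already built into the chain. Closing the remaining gap requires showing that $\bar Z^*$ simultaneously minimizes $I(X;Y|\cdot)$ over all $\bar Z$ with $XY-Z-\bar Z$. My plan here is a contradiction argument that exploits the \emph{joint} form of the minimization in~(9): if some $\bar Z'$ with $XY-Z-\bar Z'$ achieved $I(X;Y|\bar Z')<I(X;Y|\bar Z^*)$, I would use the resolvability structure supplied by $Q'$ to build a candidate pair $(Q,\bar Z')$ whose objective $I(XY;Q|\bar Z')$ falls strictly below the optimal value at $(Q^*,\bar Z^*)$, contradicting the joint optimality of the latter. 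Arranging the Markov condition $X-Q\bar Z'-Y$ while keeping the objective close to $I(X;Y|\bar Z')$---even though a generic $\bar Z'$ need not be conditionally resolvable---is the delicate step I expect to demand the most care.
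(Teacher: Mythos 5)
Your handling of the first claim and of the ``only if'' direction is sound and is essentially the paper's own argument in different clothing: the paper expands $I(XY;Q|\bar Z)=I(X;Y|\bar Z)+I(Y;Q|X\bar Z)+I(X;Q|Y\bar Z)$ (using $X-Q\bar Z-Y$) and feeds the vanishing of the last two terms into the double-Markov Lemma~1, whereas you pass to the slices $p_{XY|\bar z}$ and invoke the resolvability characterization $C_W=I\Leftrightarrow C_{GK}=I$; these are the same key lemma applied per slice, and your patching of the $Q'_{\bar z}$ into a single $Q'$ with $H(Q'|X\bar Z^*)=H(Q'|Y\bar Z^*)=0$ and $X-\bar Z^*Q'-Y$ is correct.

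The genuine gap is exactly where you flagged it, and the repair you sketch cannot close it. If $\bar Z'$ satisfies $XY-Z-\bar Z'$ and $I(X;Y|\bar Z')<I(X;Y|\bar Z^*)$, joint optimality of $(Q^*,\bar Z^*)$ only yields $C_W(X;Y|\bar Z')\geq C_W(X;Y|\bar Z^*)=I(X;Y|\bar Z^*)>I(X;Y|\bar Z')$, which is no contradiction: $C_W(X;Y|\bar Z')$ may strictly exceed $I(X;Y|\bar Z')$ whenever the $\bar Z'$-slices are not resolvable, and the certificate $Q'$ you hold lives on the $\bar Z^*$-slices, so it gives you no means of driving $I(XY;Q|\bar Z')$ down toward $I(X;Y|\bar Z')$. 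What your chain actually establishes in the ``if'' direction is only $C_W(X;Y\downarrow Z)=I(X;Y|\bar Z^*)\geq I(X;Y\downarrow Z)$. For what it is worth, the paper does not close this step either: it dismisses one direction as ``trivial,'' and the direction it works out in detail is the construction of $Q'$ from the equality --- the same direction you complete. So the statement that is actually proved (by you and by the paper) is $C_W(X;Y\downarrow Z)=I(X;Y|\bar Z^*)$ iff such a $Q'$ exists; upgrading $I(X;Y|\bar Z^*)$ to the doubly minimized $I(X;Y\downarrow Z)$ requires the additional, unestablished fact that the $C_W$-optimal $\bar Z^*$ also minimizes $I(X;Y|\cdot)$ over all $\bar Z$ with $XY-Z-\bar Z$. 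You should either supply that fact or restate the equivalence with $I(X;Y|\bar Z^*)$ in place of $I(X;Y\downarrow Z)$; as written, your proposal leaves the literal ``if'' direction open.
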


\begin{proof}
From (7), (9) and Lemma 2, the first equality is obvious. For the second equality, the ``only if'' part is trivial. For the ``if'' part, first note that
\begin{align*}
  {C_W}(X;Y \downarrow Z) &= \mathop {\min }\limits_{\substack{  Q:X - Q\bar Z - Y \\  \bar Z:XY - Z - \bar Z}}  I(XY;Q|\bar Z)= \mathop {\min }\limits_{\substack{Q:X - Q\bar Z - Y \\
  \bar Z:XY - Z - \bar Z}}  (I(Y;Q|X\bar Z) + I(X;Q|\bar Z)) \\ 
  &\mathop =  \limits^{{\text{(a)}}} \text{ }\mathop {\min }\limits_{\substack{  Q:X - Q\bar Z - Y \\  \bar Z:XY - Z - \bar Z}}  (I(X;Y|\bar{Z})+I(Y;Q|X\bar Z) + I(X;Q|Y\bar Z)),
\end{align*}
where (a) follows from writing $I(X;Q|\bar Z)$ as $I(X;Y|\bar Z)+I(X;Q|Y\bar Z)-I(X;Y|Q\bar Z)$, and noting that $I(X;Y|Q\bar Z)=0$. 
Then given jointly distributed RVs $(XYZ\bar{Z}Q)$ achieving the minimum in (9), if $I(Y;Q|X\bar Z)=I(X;Q|Y\bar Z)=0$, then by Lemma 1, there exists a pmf $p_{Q'|XYZ\bar{Z}}$ such that $H(Q'|X\bar{Z})=H(Q'|Y\bar{Z})=0$, $XY-Q'\bar{Z}-Q$ and $I(XY;Q|\bar{Z})\leq H(Q'|\bar{Z})$, with equality iff $H(Q'|Q\bar{Z})=0$. Finally note that $H(Q'|\bar{Z})\leq I(X;Y|\bar{Z})$ with equality iff $X-\bar{Z}Q'-Y$.
\qed
\end{proof}

\subsection{G{\'a}cs and K\"{o}rner (GK) Conditional CI and the SK Rate}
For the general source model (with arbitrary public discussion), calculation of the exact SK rate remains an open problem \cite{ref1,ref2,ref3,ref16}. Without any communication, the problem is still more complicated in that even the determination of probability distributions that allow for the distillation of SK remains an unsolved problem. GK showed that in a non-communicative model, common codes of a pair of discrete, memoryless correlated sources cannot exploit any correlation beyond a certain deterministic interdependence of the sources. Remarkably, they showed that the asymptotic case is no better than the zero error case and that $C_{GK}(X;Y)$ depends only on the zero pattern of the underlying joint distribution ${p_{XY}}$. Intuitively, ${C_{GK}}(X;Y|Z)$ captures the most ``explicit'' form of secret CI, i.e., the maximum amount of SK that can be extracted without any communication. In this section, we construct simple distributions for which ${C_{GK}}(X;Y|Z)$ achieves a tight bound on the SK rate. We conjecture that ${C_{GK}}(X;Y|Z)$ quantifies the achievable SK rate for non-communicative key agreement models. We also comment on the lossless interconvertibility of secret correlations.
\spnewtheorem{exmpl}{Example}{\bfseries}{\itshape}
\begin{exmpl}
\emph{Consider the distribution $p_{XYZ}^1$ with $\mathcal{X}=\mathcal{Y}=\mathcal{Z}=\{0,1,2,3\}$. We write $p_{XYZ}^1(a,b,c)=(abc)$. Given $(000)=(011)=(101)=(110)=\tfrac{1}{8}$, and $(222)=(333)=\tfrac{1}{4}$. Graphically, $p_{XYZ}^1$ is shown in the following table (with $Z$'s value given in parentheses): $p_{XYZ}^1=\tfrac{1}{8}\left(\begin{smallmatrix}1(0)&1(1)&.&.\\1(1)&1(0)&.&.\\.&.&2(2)&.\\.&.&.&2(3)\end{smallmatrix}\right)$. Alice, Bob and Eve each have access to this table and many independent copies of $X$, $Y$ and $Z$, respectively. For each independent random experiment generating $(X,Y,Z)\sim p_{XYZ}^1$, Eve can infer Alice and Bob's values with complete certainty, when she receives either 2 or 3. When she receives either 0 or 1, she can only infer that Alice's and Bob's symbols are restricted to the upper left quadrant (i.e., the set $\{0,1\}$), but then in this range, $X$ and $Y$ are uniformly distributed. Clearly, Alice and Bob can share no secret. 
Now, consider the distributions $p_{XYZ}^2=\tfrac{1}{8}\left(\begin{smallmatrix}1(0)&1(1)&.&.\\1(1)&1(0)&.&.\\.&.&2(2)&.\\.&.&.&2(2)\end{smallmatrix}\right)$, and  $p_{XYZ}^3=\tfrac{1}{8}\left(\begin{smallmatrix}1(0)&1(1)&.&.\\1(1)&1(0)&.&.\\.&.&2(0)&.\\.&.&.&2(1)\end{smallmatrix}\right)$, where Eve's symbol set is successively depleted to $\mathcal{Z}=\{0,1,2\}$ and $\mathcal{Z}=\{0,1\}$, respectively.
In the latter case, Alice and Bob can realize one perfect SK bit, since Eve can no longer infer anything about their quadrant information (upper left or lower right). 
This intuition is borne out by ${C_{GK}}(X;Y|Z)$, which evaluates to 0, 0.5 and 1, respectively, for $p^1$, $p^2$ and $p^3$.
$p^3$ is \emph{resolvable} but not \emph{conditionally resolvable}. A distribution $p^4$ that is \emph{conditionally resolvable} but not \emph{resolvable} is the following: $(000)=(100)=(010)=(110)=\tfrac{1}{18},\text{ }(001)=(101)=(011)=(111)=\tfrac{1}{21},\text{ } (032)=(042)=\tfrac{1}{15},\text{ } (252)=\tfrac{1}{5},\text{ } (220)=\tfrac{1}{9},\text{ } (221)=\tfrac{1}{7}$. 
Finally, $C_{GK}(X;Y|Z)=1$ for the following more general distribution $p^5$ that is neither \emph{resolvable} nor \emph{conditionally resolvable}: $(000)=(011)=(020)=(101)=\tfrac{1}{10},\text{ }(110)=(121)=\tfrac{1}{20},\text{ }(230)=(331)=\tfrac{1}{4}$, where using arguments similar to the ones for $p^3$, Alice and Bob can be shown to achieve one perfect SK bit.}
\end{exmpl}

\begin{exmpl}
\emph{Consider the following sequence of distributions:
\begin{align*}
  {({p_{XYZ}})_n}(x,y,z)&= \begin{cases}
  \tfrac{1}{2n^2}, {\text{ }}\operatorname{if}{\text{ }}x,y \in \{0,\ldots,n-1\},{\text{ }} z=x+y{\text{ }}(\bmod n), \\ 
  \tfrac{1}{2n}, {\text{ }}\operatorname{if}{\text{ }}x \in \{n,\ldots,2n-1\},{\text{ }} y=x,{\text{ }}z=x{\text{ }}(\bmod n).
  \end{cases}\\
  {({q_{XYZ}})_n}(x,y,z)&= \begin{cases}
  \tfrac{1}{{\log n}}{({p_{XYZ}})_n},{\text{ }}\operatorname{if}{\text{ }} x \ne \Delta,{\text{ }}y \ne \Delta,{\text{ }}z \ne \Delta, \\
  1 - \tfrac{1}{{\log n}},{\text{ }}\operatorname{if}{\text{ }} x = y = z = \Delta,
  \end{cases}
\end{align*}
where ${({q_{XYZ}})_n}$ is derived from ${({p_{XYZ}})_n}$ by extending the symbol sets $({\mathcal{X}},{\mathcal{Y}},{\mathcal{Z}})$ to include an extra symbol $\Delta$. Renner and Wolf \cite{ref3} showed that ${({q_{XYZ}})_n}$ has \emph{asymptotic bound information}, i.e., ${({q_{XYZ}})_n}$ is asymptotically non-distillable $(S({X_{(n)}};{Y_{(n)}}||{Z_{(n)}}) = \tfrac{1}{{\log n}} \to 0$, as $n \to \infty)$, and yet cannot be created by LOPC, since $SK_c({X_{(n)}};{Y_{(n)}}|{Z_{(n)}}) \geq I({X_{(n)}};{Y_{(n)}} \downarrow {Z_{(n)}}) = \tfrac{1}{{\log n}}(1 + \tfrac{1}{2}\log n) > \tfrac{1}{2}$ \cite{ref3}. Omitting the index $n$, for a given $Z$, the maximal common RV of the pair $(X,Y)$ has the pmf ${p_{Q_*}}(0) = \tfrac{1}{{2\log n}},{\text{ }}{p_{Q_*}}(1) = 1 - \tfrac{1}{{\log n}},{\text{ }}{p_{Q_*}}(i) = \tfrac{1}{{2n\log n}},{\text{ }}i = 2,...,n + 1$, so that $C_{GK}(X;Y|Z)=H(Q_*|Z)=\tfrac{1}{{\log n}}$, thus achieving the SK rate.}
\end{exmpl}
In both the examples above, no assumptions have been made on the nature of the communication protocol (or its lack thereof) between Alice and Bob. For a non-communicative SK agreement model, i.e. a restricted key agreement model where no communication is allowed between Alice and Bob, let $S_{0-\text{comm}}(X;Y||Z)$ denote the maximum attainable rate at which a SK can be extracted by Alice (knowing $X^n$) and Bob (knowing $Y^n$) about which Eve (knowing $Z^n$) has virtually no information. As usual, $(X^n,Y^n,Z^n)$ are independent repeated realizations of the random experiment $p_{XYZ}$. 
\spnewtheorem{conjec}[theorem]{Conjecture}{\bfseries}{\itshape}
\begin{conjec}
For non-communicative SK agreement models, $S_{0-\text{comm}}(X;Y||Z)$ $={C_{GK}}(X;Y|Z)$.
Furthermore, if $p_{XYZ}$ is conditionally resolvable, then $$S_{0-comm}(X;Y||Z)={C_{GK}}(X;Y|Z)=I(X;Y|Z).$$
\end{conjec}

\section{Concluding Remarks}
In summary, we have presented two results. The first result demonstrates the power of the framework of resource monotones for axiomatizing the general properties of upper bounds on the SK rate. We showed that $C_W(X;Y|Z)$ violates monotonicity under LO by Eve, which can be used to bootstrap the definition of the SK cost of simulating the triple $(XYZ)$ up to a local degrading of $Z$. In the past, a similar approach has been put to direct practical use for deriving strong bounds on the SK rate \cite{ref16}. 
Given many independent copies of a source distribution ($q_{XYZ}$) and a target distribution of a perfect secret bit ($p_{SS{\Delta}}^s$), \emph{reversible} conversion of $q_{XYZ}$ and $p_{SS{\Delta}}^s$ is possible iff $R(q \to p^s)R(p^s \to q)=1$, i.e., iff $S(X;Y||Z)=I(X;Y \downarrow Z)$ and $I(X;Y \downarrow Z)=SK_c(X;Y|Z)$. In Theorem 3, we have given necessary and sufficient conditions for achieving the second equality. 
More generally, monotones can be used to derive upper bounds on the asymptotic rate of conversion of $q_{XYZ}$ to any other distribution $q_{X'Y'Z'}$ (e.g., one which is relatively less favorable to Eve). Thus, resource theories hold promise in capturing a general calculi of secret correlations.

In the second part, we constructed simple distributions for which the conditional G{\'a}cs and K\"{o}rner CI achieves a tight bound on the SK rate. The sequence of distributions ${({q_{XYZ}})_n}$ in Example 2 was originally constructed by Renner and Wolf \cite{ref3} to show that there exists sequences of distributions with asymptotic bound information, i.e., asymptotically non-distillable correlations with positive SK cost. This is the best that has been shown so far for the bipartite case. For more than two parties, the possibility of creating different bipartitions across the honest parties immensely simplifies the problem and, indeed, multipartite bound information has been shown to exist \cite{ref18}. Constructing distributions that can show the existence of bipartite bound information remains an open problem and a scope for future work. Another problem of independent interest is to consider single shot versions of the SK cost of formation and intrinsic information.

\section*{Acknowledgments}
PKB wishes to thank Amin Gohari for valuable comments and Paul Cuff for short useful discussions over email. This work is based in part on PKB's master's thesis and was supported in part by the AVLSI Consortium, IIT Kharagpur.


\begin{thebibliography}{99}
\bibitem{ref1}
Maurer, U. M.: Secret key agreement by public discussion from common information. IEEE Transactions on Information Theory, 39(3), 33--742 (1993)
\bibitem{ref2}
Maurer, U. M., Wolf, S.: Unconditionally secure key agreement and the intrinsic conditional information. IEEE Transactions on Information Theory, 45(2), 499--514 (1999)
\bibitem{ref3}
Renner, R., Wolf, S.: New bounds in secret-key agreement: The gap between formation and secrecy extraction. In: Advances in Cryptology--EUROCRYPT 2003, pp. 562--577, Springer (2003)
\bibitem{ref4}
Winter, A.: Secret, public and quantum correlation cost of triples of random variables. in: Proceedings of the IEEE International Symposium on Information Theory (ISIT 2005), pp. 2270--2274 (2005)
\bibitem{ref5}
Chitambar, E., Hsieh, M. H., Winter, A.: The Private and Public Correlation Cost of Three Random Variables with Collaboration. arXiv preprint arXiv:1411.0729 (2014)
\bibitem{ref6}
Horodecki, K., Horodecki, M., Horodecki, P., Oppenheim, J.: Information theories with adversaries, intrinsic information, and entanglement. Foundations of Physics, 35(12), 2027--2040 (2005)
\bibitem{ref7}
G{\'a}cs, P., K\"{o}rner, J.: Common information is far less than mutual information. Problems of Control and Information Theory, 2(2), 149--162 (1973)
\bibitem{ref8}
Wyner, A. D.: The common information of two dependent random variables. IEEE Transactions on Information Theory, 21(2), 163--179 (1975)
\bibitem{ref9}
Cuff, P.: Distributed channel synthesis. IEEE Transactions on Information Theory, 59(11), 7071--7096 (2013)
\bibitem{ref10}
Bennett, C. H., Devetak, I., Harrow, A. W., Shor, P. W., Winter, A.: The quantum reverse Shannon theorem and resource tradeoffs for simulating quantum channels. IEEE Transactions on Information Theory, 60(5), 2926--2959 (2014)
\bibitem{ref11}
Csisz{\'a}r, I., K\"{o}rner, J.: Information theory: coding theorems for discrete memoryless systems. Cambridge University Press (2011)
\bibitem{ref12}
Ahlswede, R., K\"{o}rner, J.: On common information and related characteristics of correlated information sources. Preprint. Presented at the 7th Prague Conference on Information Theory (1974)
\bibitem{ref13}
Christandl, M., Renner, R., Wolf, S.: A property of the intrinsic mutual information. In: Proceedings of the IEEE International Symposium on Information theory (ISIT 2003), pp. 258--258 (2003)
\bibitem{ref14}
Maurer, U., Renner, R., Wolf, S.: Unbreakable keys from random noise. In: Security with Noisy Data, pp. 21--44, Springer (2007)
\bibitem{ref15}
Christandl, M., Ekert, A., Horodecki, M., Horodecki, P., Oppenheim, J., Renner, R.: Unifying classical and quantum key distillation. In: Theory of Cryptography, pp. 456--478, Springer (2007)
\bibitem{ref16}
Gohari, A. A., Anantharam, V.: Information-theoretic key agreement of multiple terminals--Part I. IEEE Transactions on Information Theory, 56(8), 3973--3996 (2010)
\bibitem{ref17} 
Cerf, N. J., Massar, S., Schneider, S.: Multipartite classical and quantum secrecy monotones. Physical Review A, 66(4), 042309 (2002)
\bibitem{ref18}
Masanes, L., Acin, A.: Multipartite Secret Correlations and Bound Information. IEEE Transactions on Information Theory, 52(10), 4686--4694 (2006)
\bibitem{ref19}
Liu, R., Trappe, W. (eds.): Securing Wireless Communications at the Physical Layer. Springer, New York (2010)
\end{thebibliography}
\end{document}